\newtheorem{algo}{Algorithm}
\newcommand{\ALG}{\textrm{HAR} \xspace}
\newcommand{\OPT}{\textrm{OPT} \xspace}
\begin{document}

\title{
  The Harmonic Policy for Online Buffer Sharing is $(2 + \ln n)$-Competitive: A Simple Proof
}


\author{Vamsi Addanki}
\affiliation{%
  \institution{Purdue University}
  \city{West Lafayette}
  \state{Indiana}
  \country{USA}
}

\author{Julien Dallot}
\affiliation{%
  \institution{TU Berlin}
  \city{Berlin}
  \state{Berlin}
  \country{Germany}
}

\author{Leon Kellerhals}
\affiliation{%
  \institution{TU Clausthal}
  \city{Clausthal}
  \country{Germany}
}

\author{Maciej Pacut}
\affiliation{%
  \institution{TU Berlin}
  \city{Berlin}
  \state{Berlin}
  \country{Germany}
}

\author{Stefan Schmid}
\affiliation{%
  \institution{TU Berlin}
  \city{Berlin}
  \state{Berlin}
  \country{Germany}
}

\date{}

\begin{abstract}
  The problem of online buffer sharing is expressed as follows.
  A switch with $n$ output ports receives a stream of incoming packets.
  When an incoming packet is accepted by the switch, it is stored in a shared buffer of capacity $B$ common to all packets and awaits its transmission through its corresponding output port determined by its destination.
  Each output port transmits one packet per time unit.
  The problem is to find an algorithm for the switch to accept or reject a packet upon its arrival in order to maximize the total number of  transmitted packets.

  Building on the work of Kesselman et al. (STOC 2001) on split buffer sharing, Kesselman and Mansour (TCS 2004) considered the problem of online buffer sharing which models most deployed internet switches.
  In their work, they presented the Harmonic policy and proved that it is $(2 + \ln n)$-competitive, which is the best known competitive ratio for this problem.
  The Harmonic policy unfortunately saw less practical relevance as it performs $n$ threshold checks per packets which is deemed costly in practice, especially on network switches processing multiple terabits of packets per second.
  While the Harmonic policy is elegant, the original proof is also rather complex and involves a lengthy matching routine along with multiple intermediary results.

  This note presents a simplified Harmonic policy, both in terms of implementation and proof.
  First, we show that the Harmonic policy can be implemented with a constant number of threshold checks per packet, matching the widely deployed \emph{Dynamic Threshold} policy.
  Second, we present a simple proof that shows the Harmonic policy is $(2 + \ln n)$-competitive.
  In contrast to the original proof, the current proof is direct and relies on a 3-partitioning of the packets.
\end{abstract}

\maketitle

\section{Introduction}
In the online buffer sharing problem, we consider a network switch with $n$ output ports labeled $1 \dots n$ and a shared buffer which can hold at most $B$ packets.
A stream of packets $\sigma = p_1, p_2, \dots$ arrives in the switch.
Each packet $p_i$ arrives at some time $t_i$ and must be transmitted through a specific output port $q_i \in 1 \dots n$.
We assume that the packets can be totally ordered by their arrival time --- if two packets arrive at the same time, we assume that we break the tie arbitrarily.

When a packet $p_i$ arrives, the switch decides to either accept or reject it.
If $p_i$ is accepted, it is stored in the buffer and queued for transmission through the output port $q_i$ in a first-in-first-out (FIFO) manner.
Each output port transmits up to one packet at each integer time $t = 1, 2, \dots$; the output port does not transmit anything if no packet is queued.

The task is to design an \emph{acceptance policy} for the switch that decides whether to accept or reject each incoming packet.
The objective is to maximize the number of accepted packets while respecting the shared buffer capacity at all time.
The acceptance policy must be \emph{online}, which means that the decision to accept or reject a packet must be made at the time of its arrival without knowledge of future packets.

Several buffer sharing policies have been established over the years.
Among the most relevant are the Sharing with Maximum Queue Lengths (SMXQ)~\cite{1094076} and Dynamic Thresholds~\cite{7859368} algorithms, with the latter being the most used in practice.
Both have a competitive ratio no better than $O(n)$, i.e., there exists a sequence of packets such that an optimal, offline acceptance policy accepts $O(n)$ more packets.

Following a series of results on switch buffer sharing~\cite{10.1145/380752.380847, 414642}, Kesselman and Mansour~\cite{KESSELMAN2004161} present the Harmonic policy for online buffer sharing in 2004.
The main novelty of the Harmonic policy was to consider the queues sorted by their current size: it assigns roughly a $1/i$ fraction of the memory to the $i$-th largest queue, resulting in a competitive ratio of $(2+\ln n)$.
Their policy works as follows.

\begin{algo}[The Harmonic Policy]
  \label{def:harmonic}
  Accept the incoming packet if, after accepting it, the following would hold for each $i \in 1 \dots n$: the total occupancy of the $i$ queues with greatest occupancies is at most $\frac{B}{1 + \ln n} \cdot \sum_{k \in 1 \dots i} \frac{1}{k}$.
\end{algo}

The Harmonic policy is the best known algorithm with respect to the competitive ratio metric --- Kesselman and Mansour also showed a lower bound of $\Omega(\log n / \log \log n)$ for any online, deterministic policy, showing that Harmonic is close to optimal.
Even though the alternative algorithms all have a worst-case competitive ratio of $O(n)$, the Harmonic policy however saw little practical relevance in modern network switches.
This is arguably due to the initial presentation of the policy~\cite{KESSELMAN2004161}, which suggests that one must maintain a list of the output ports by queue occupancy and must perform $n$ threshold checks upon each packet arrival, which is deemed too costly in practice.
Moreover, the original proof of Harmonic's competitive ratio is rather complex, involving a lengthy matching routine along with multiple intermediary results, which makes it harder for practicioners to adopt and understand the algorithm.
Nowadays, buffer management algorithms are revisited through the lens of \emph{learning-augmented} algorithms~\cite{295535, 9796967, 9796784}, where the acceptance policy uses machine-learned predictions to improve its performances.
Most of the algorithms are based on one of the established algorithms and often use competitive analysis as a key metric to evaluate the prediction quality~\cite{LykourisV21, 295535}; The Harmonic algorithm therefore plays a central role as both a reference and a base to build new learning-augmented algorithms.

In this work, we first want to point out a quick observation that the Harmonic algorithm can be implemented to use only a constant number of operations per packet.
Secondly, and mainly, we present a significantly simpler proof for its competitive ratio.
We hope that this simpler proof makes Harmonic more accessible and possibly leads the way to more competitive learning-augmented algorithms for buffer sharing.

\section{Efficiency of the Harmonic Policy}

We briefly show that the Harmonic policy can be implemented very efficiently.
Central to the Harmonic policy are the \emph{thresholds}, which are proportional to the harmonic series.

\begin{definition}[Thresholds]
  For all $k = 1 \dots n$, we define the threshold $T_k = \frac{B}{1 + \ln n} \cdot \frac{1}{k}$.
\end{definition}

\begin{algo}[The Modified Harmonic Policy]
  \label{def:modified_harmonic}
  When packet $p_i$ arrives, compute the smallest $T_k$ such that the occupancy of queue $q_i$ is strictly lower than $T_k$.
  Accept $p_i$ if and only if accepting it results in at most $k$ queues with occupancy no lower than $T_k$.
\end{algo}

For each queue, the smallest threshold $T_k$ greater than its occupancy can be tracked in constant time per packet: each time a new packet is accepted or transmitted for a given queue, compare the new queue's occupancy with the current threshold $T_{k}$ and update it to $T_{k+1}$ or $T_{k-1}$ accordingly.
Finally, it is possible to track the number of queues no lower than a given threshold in constant time per packet.
For this, hold a list where values are associated with the thresholds, $T_1, T_2, \dots, T_n$ from left to right, and maintain the cumulative number of queues whose occupancy is above each threshold.
When a packet $p_i$ arrives, both the list update and the threshold check can be performed in constant time (for the update: access the value of the previous threshold and increment the value on its left, for the threshold check: access the value and compare it with the corresponding threshold).
Those structures can also be maintained efficiently when packets are transmitted from the queues: the number of operations is then in $O(n)$, a constant operation per queue, which is comparable with Dynamic Threshold~\cite{7859368} where the algorithm keeps track of all queue occupancies.
All those changes require only two additional integer variables per queue and maximum two increments or decrements operations per packet arrival. 

\section{A Simpler Proof for the Competitive Ratio}

We next give a simpler proof for the following theorem, originally proven by Kesselman and Mansour~\cite{KESSELMAN2004161}.

\begin{theorem}
  \label{th:log_competitive_ratio}
  The Modified Harmonic Policy is $(2 + \ln n)$-competitive.
\end{theorem}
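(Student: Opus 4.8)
The plan is to compare the numbers of packets transmitted by the two policies; since in this model every accepted packet is eventually transmitted, writing $\mathcal{A}$ and $\mathcal{O}$ for the sets of packets accepted by the Modified Harmonic policy and by OPT, it suffices to prove $|\mathcal{O}| \le (2 + \ln n)\,|\mathcal{A}|$. I would first record the structural invariant that makes the policy legal and drives everything: at every moment and for every $i \in \{1,\dots,n\}$, the total occupancy of the $i$ fullest queues of the policy is at most $\frac{B}{1+\ln n}\sum_{k=1}^{i}\frac1k$ -- the invariant of the original Harmonic policy, which the single-threshold test of the Modified policy also maintains (up to the integer rounding of the thresholds). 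It holds at the start, survives transmissions because occupancies only decrease, and survives acceptances because the test is exactly what is needed; taking $i=n$ gives total occupancy at most $B$, so the policy is feasible. I would also note the ``dual'' form: at most $k$ queues have occupancy at least $T_k=\frac{B}{k(1+\ln n)}$.

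From the acceptance rule itself I would extract a rejection lemma: if the policy rejects a packet $p$ destined for queue $q$ at step $t_p$, and $k$ is the index selected by the rule (so $T_{k+1}\le\mathrm{occ}(q)<T_k$), then at step $t_p$ there are at least $k$ queues \emph{other than} $q$ whose occupancy is at least $T_k$. Each of these holds at least $\lceil T_k\rceil$ packets and loses at most one per step, so all $k$ of them stay non-empty -- hence the policy transmits from each -- throughout the window $I_p:=[\,t_p,\ t_p+\lceil T_k\rceil-1\,]$; in particular the policy makes at least $k\lceil T_k\rceil\ge kT_k=\frac{B}{1+\ln n}$ transmissions inside $I_p$.

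Next I would partition $\mathcal{O}$ into three classes. Put $\mathcal{O}_1:=\mathcal{O}\cap\mathcal{A}$, so that $|\mathcal{O}_1|\le|\mathcal{A}|$ is immediate. For a packet accepted by OPT but rejected by the policy, let $t$ be the step at which OPT transmits it, from some queue $q$, and place it in $\mathcal{O}_2$ if queue $q$ is non-empty in the policy at step $t$ and in $\mathcal{O}_3$ otherwise. For $\mathcal{O}_2$, map each packet to the packet the policy transmits from queue $q$ at step $t$: this is well defined by the choice of $\mathcal{O}_2$, it lands in $\mathcal{A}$, and it is injective because OPT transmits at most one packet from a given queue at a given step, so $|\mathcal{O}_2|\le|\mathcal{A}|$. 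This accounts for the ``$2$''; it remains to prove $|\mathcal{O}_3|\le(\ln n)\,|\mathcal{A}|$.

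That last inequality is the technical heart, and the step I expect to fight with. The idea is to charge each $p\in\mathcal{O}_3$ to transmissions of the policy inside its window $I_p$: the rejection lemma furnishes $k$ queues busy throughout $I_p$, and since $p$ vacates OPT's queue $q$ only while that queue is empty in the policy, the draining of queue $q$ in the policy ties $p$'s OPT-transmission time to $I_p$ as well; one then distributes $p$'s unit of charge over the $\ge\frac{B}{1+\ln n}$ busy-queue transmissions in $I_p$, giving a transmission from a queue at occupancy $\approx T_k$ roughly a $\frac1k$ share, so that summing over the at most $n$ possible levels produces the harmonic sum $\sum_k\frac1k\le1+\ln n$ per transmission for the rejected packets; peeling off the level already paid for by $\mathcal{O}_2$ leaves the harmonic remainder at $\le\ln n$, i.e. $|\mathcal{O}_3|\le(\ln n)\,|\mathcal{A}|$. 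The obstacle is bounding the total charge any single transmission receives: rejections can arrive in bursts, so many $\mathcal{O}_3$-packets can share one window, and the windows of distinct rejections overlap; closing this gap should use that OPT still transmits the shared packets one per step and that the (dual) invariant caps how many queues can simultaneously exceed a given occupancy level, so that no transmission absorbs more than $O(\ln n)$. Combining the three estimates then gives $|\mathcal{O}|=|\mathcal{O}_1|+|\mathcal{O}_2|+|\mathcal{O}_3|\le(2+\ln n)\,|\mathcal{A}|$.
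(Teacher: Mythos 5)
There is a genuine gap, and you have named it yourself: the bound $|\mathcal{O}_3|\le(\ln n)\,|\mathcal{A}|$ is the entire content of the theorem and your proposal does not establish it. The window-charging scheme has two unclosed holes. First, the connection between a packet $p\in\mathcal{O}_3$ and its window $I_p=[t_p,\,t_p+\lceil T_k\rceil-1]$ is not justified: OPT may transmit $p$ arbitrarily long after its arrival, and nothing in your setup controls the occupancy of OPT's queue $q$ at the arrival time $t_p$ (you never compare the two buffers at arrival), so there is no reason the OPT-transmission of $p$ is tied to $I_p$. Second, even granting that, many rejections with overlapping windows can charge the same transmissions of the policy, and bounding the total charge per transmission by $O(\ln n)$ is precisely the hard combinatorial step; asserting that the dual invariant ``should'' close it is not a proof. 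There is also a quantitative squeeze you impose on yourself: because $\mathcal{O}_1$ and $\mathcal{O}_2$ each separately consume a full copy of $|\mathcal{A}|$, you must prove the residual class is at most $(\ln n)\,|\mathcal{A}|$, which is strictly tighter than what the paper needs for its residual class.

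The paper avoids the time-window charging entirely. Its second class is defined by an \emph{arrival-time} comparison ($p$ is rejected and its queue is strictly higher in $\ALG$'s buffer than in $\OPT$'s at arrival), and the classes $A\cup B$ are injected jointly into $\ALG$ by a per-queue mapping whose correctness is the invariant $g_i(q)=u_i(q)$, where $g_i(q)$ is the occupancy excess of $\ALG$ over $\OPT$ on queue $q$; this costs only one copy of $|\ALG|$ for both classes. The remaining class $C$ is then handled by a greedy matching with load at most $1+\ln n$ per $\ALG$-packet, and the key step is a \emph{single-instant} buffer-capacity contradiction rather than a temporal charging argument: if some $p\in C$ cannot be matched, then the at least $k\lceil T_k\rceil\ge B/(1+\ln n)$ packets below threshold $T_k$ in $\ALG$'s buffer are each matched to $1+\ln n$ packets that are provably still resident in $\OPT$'s buffer (because each matched $C$-packet sits higher in its $\OPT$-queue than its mate and drains later), forcing $\OPT$'s buffer past $B$. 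If you want to salvage your route, you would need to either prove the per-transmission charge bound for overlapping windows or switch to an arrival-time classification and an instantaneous counting argument of this kind.
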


\begin{proof}
  We refer to the Modified Harmonic policy as Harmonic, $\ALG$ is the set of packets accepted by Harmonic and $\OPT$ is the set of packets accepted by an offline, optimal policy.

  \noindent
  We partition $\OPT$ in three subsets $A$, $B$ and $C$ as follows:
  \begin{enumerate}
  \item 
    $A = \OPT \cap \ALG$
  \item 
    $B$ contains the packets $p \notin \ALG$ where $p$'s associated queue is strictly higher in $\ALG$'s cache than in $\OPT$'s cache at the time $p$ arrives.
  \item 
    $C = \OPT \setminus (A \cup B)$ 
  \end{enumerate}

  We will first prove that $|\ALG| \ge |A| + |B|$ by providing an injective mapping from $A \cup B$ to $\ALG$.
  To obtain the desired mapping, we will construct injective sub-mappings restricted to the packets with output port $q$ for each $q \in 1 \dots n$ and then merge all the submappings.
  The sub-mappings are constructed iteratively as packets arrive.
  When packet $p_i \in A \cup B$ with output port $q \in 1 \dots n$ arrives, we extend the current sub-mapping as follows:
  \begin{itemize}
  \item
    If $p_{i} \in A$, then map $p_{i}$ to itself in $\ALG$.
  \item
    If $p_{i} \in B$, then find a packet $p \in \ALG$ with the same output port $q$ that already arrived and to which no packet was mapped yet, and map $p_i$ to $p$.
  \end{itemize}

  We are left to prove that there always exists an unmapped packet like described in the second mapping rule.
  For a given packet $p_i$, we call $\text{Occ}(i, q, \ALG)$ the occupancy of queue $q$ at time $t_i$ in $\ALG$ (likewise for $\OPT$) and define:
  \begin{align*}
    g_i(q) = \max \left\{\text{Occ}(i, q, \ALG) - \text{Occ}(i, q, \OPT), 0\right\}.
  \end{align*}
  We also define $u_i(q)$ to be the number of packets of $\ALG$ in queue $q$ without antecedent in the sub-mapping at time $t_i$.
  We prove by induction on the arriving packets that
  \begin{align*}
    \forall q, \forall i,  \quad g_i(q) = u_i(q),
  \end{align*}
  which would directly imply that the described mapping is possible as then $p_{i+1} \in B \implies g_i(q) > 0 \implies u_i(q) > 0$.
  The base case is clearly true at time $0$ for all $q$.
  Assuming that this holds for some $i$, we prove that it also holds for $i+1$ with a case distinction on the incoming packet $p_{i+1}$.
  If (1) $p_{i+1} \in B$ then both $g_{i+1}(q)$ and $u_{i+1}(q)$ are decremented by $1$.
  Otherwise when (2) $p_{i+1} \in A$ then both $g_{i+1}(q)$ and $u_{i+1}(q)$ stay unchanged.
  Otherwise if (3) $p_{i+1} \in C$ then both $g_{i+1}(q)$ and $u_{i+1}(q)$ stay unchanged.
  Finally, if (4) $p_{i+1} \in \ALG \setminus \OPT$ then both $g_{i+1}(q)$ and $u_{i+1}(q)$ are incremented by $1$.

  We constructed an injective mapping from $A \cup B$ to $\ALG$ restricted to the packets with output port $q$; we obtain an injective mapping from $A \cup B$ to $\ALG$ by merging the sub-mappings for each output port $q = 1 \dots n$, proving that $|\ALG| \ge |A| + |B|$.\\

  We then prove that $(1 + \ln n) \cdot |\ALG| \ge |C|$.
  We prove it by exhibiting a \emph{matching} between the packets of $C$ and those of $\ALG$ such that:
  \begin{enumerate}
  \item
    each packet of $C$ is matched to exactly one packet of $\ALG$
  \item 
    each packet of $\ALG$ is matched to at most $1 + \ln n$ packets of $C$.
  \end{enumerate}

  We construct the matching iteratively as packets of $C$ arrive.
  We prove that, when \OPT accepts a packet $p \in C$, there is always a packet in $\ALG$'s buffer available for matching that will be drained earlier, that is, matched with strictly less than $1 + \ln n$ packets and with a lower waiting queue than $p$.
  We do this by contradiction: let $p \in C$ be the first packet that is impossible to match without violating the wanted condition on the matching.
  We note $T_k$ the threshold that triggered the rejection of $p$ by Harmonic, that is, Harnomic's buffer would contain strictly more than $k$ queues with occupancy no lower than $T_k$ if $p$ had been accepted.
  When \OPT accepts $p$, its waiting queue is greater than $T_k$ as $p \notin B$.
  Thus by assumption, all the packets in $\ALG$'s current buffer with position no greater than $T_k$ are already matched with $1 + \ln n$ packets in $C$.
  Notice that the packets in $\ALG$'s buffer are matched to packets that are currently in $\OPT$'s buffer --- this is true since, whenever some packet $p^{\prime} \in C$ is matched, its position in $\OPT$'s queue is higher than that of its mate packet in $\ALG$'s queue and it will therefore be drained later.
  Harmonic's buffer contains at least $k \cdot \left\lceil T_k \right\rceil \ge \frac{B}{1 + \ln n}$ packets.
  Each of those packets are matched with $1 + \ln n$ packets of $C$ in $\OPT$'s current buffer, which thus contains at least $B$ packets without counting $p$: $\OPT$ cannot have accepted $p$, a contradiction.

  Summing the two results obtained so far gives $(2 + \ln n) \cdot |\ALG| \ge |A| + |B| + |C| = |\OPT|$, proving the claim.
\end{proof}

\paragraph*{Acknowledgements}
Research supported by the German Research Foundation (DFG), grant 470029389 (FlexNets), 2021-2025.

\bibliographystyle{ACM-Reference-Format}
\bibliography{references}

\end{document}